\let\NAT@parse\undefined
\newtheorem{theorem}{Theorem}
\newtheorem{lemma}{Lemma}
\newtheorem{definition}{Definition}
\newtheorem{assumption}{Assumption}
\newcommand{\bv}[1]{\mathbf{#1}}		% Bold variable for English letters
\newcommand{\bvgrk}[1]{{\boldsymbol{#1}}}	% Bold variable for Greek letters
\newcommand{\jbar}{\bar{j \phantom{\tiny \,}} \kern -0.1em}
\newcommand{\vbar}{\bar{v}}
\newcommand{\wbar}{\bar{w}}
\newcommand{\ybar}{\bar{y}}
\newcommand{\jhat}{\hat{j \phantom{\tiny \,}} \kern -0.1em}
\newcommand{\jtil}{\tilde{j \phantom{\tiny \,}} \kern -0.1em}
\newcommand{\vj}{\bv{j}}
\newcommand{\vv}{\bv{v}}
\newcommand{\vw}{\bv{w}}
\newcommand{\vx}{\bv{x}}
\newcommand{\vy}{\bv{y}}
\newcommand{\vjbar}{\bar{\vj \phantom{\tiny \,}} \kern -0.1em}
\newcommand{\vjhat}{\hat{\vj \phantom{\tiny \,}} \kern -0.1em}
\newcommand{\vyhat}{\hat{\vy}}
\newcommand{\vjtil}{\tilde{\vj \phantom{\tiny \,}} \kern -0.1em}
\newcommand{\Bbar}{\bar{B}}
\newcommand{\Acal}{\mathcal{A}}
\newcommand{\Bcal}{\mathcal{B}}
\newcommand{\Dcal}{\mathcal{D}}
\newcommand{\Ecal}{\mathcal{E}}
\newcommand{\Lcal}{\mathcal{L}}
\newcommand{\Ocal}{\mathcal{O}}
\newcommand{\Qcal}{\mathcal{Q}}
\newcommand{\Scal}{\mathcal{S}}
\newcommand{\Wcal}{\mathcal{W}}
\newcommand{\Ycal}{\mathcal{Y}}
\newcommand{\vA}{\bv{A}}
\newcommand{\vC}{\bv{C}}
\newcommand{\vI}{\bv{I}}
\newcommand{\vAcalhat}{\hat{\bm{\Acal} \phantom{a}} \kern-0.5em}
\newcommand{\vAcalcheck}{\check{\bm{\Acal} \phantom{a}} \kern-0.5em}
\newcommand{\vAcalbar}{\bar{\bm{\Acal} \phantom{a}} \kern-0.5em}
\newcommand{\valpha}{\bvgrk{\upalpha}}
\newcommand{\vgamma}{\bvgrk{\upgamma}}
\newcommand{\veta}{\bvgrk{\upeta}}
\newcommand{\norm}[1]{\|{#1} \|}
\newcommand{\crlbkt}[1]{\{ #1 \}}
\newcommand{\sqbktbig}[1]{\bigl[ #1 \bigr]}
\newcommand{\prnthbig}[1]{\bigl( #1 \bigr)}
\newcommand{\prnthBig}[1]{\Bigl( #1 \Bigr)}
\newcommand{\R}{\mathbb{R}}				% Real line R
\newcommand{\dm}[2]
{
	\IfStrEq{#2}{1}{\R^{#1}}{\R^{#1 \x #2}}
}
\newcommand{\N}{\mathcal{N}}			% Normal Distribution N
\newcommand{\expctn}{\mathbb{E}} 	 	% Expectation
\newcommand{\inv}{{-1}} 				% Superscript inverse
\newcommand{\prob}{\mathbb{P}} 				% The upright P to denote probability
\newcommand*{\T}{{\mathpalette\@transpose{}}}
\newcommand*{\@transpose}[2]{\raisebox{\depth}{$\m@th#1\intercal$}}
\newcommand*{\x}{\mathsf{x}\mskip1mu} 	
\newcommand{\splitatcommas}[1]{%
	\begingroup
	\begingroup\lccode`~=`, \lowercase{\endgroup
		\edef~{\mathchar\the\mathcode`, \penalty0 \noexpand\hspace{0pt plus .1em}}%
	}\mathcode`,="8000 #1%
	\endgroup
}
\newcommand\Item[1][]{% 
\ifx\relax#1\relax  \item \else \item[#1] \fi
\abovedisplayskip=0pt\abovedisplayshortskip=0pt~\vspace*{-\baselineskip}}
\newcounter{reviewer}
\newcounter{point}[reviewer]
\newcommand{\customlabelfig}[2]{%
    \protected@write \@auxout {}{
    \string \newlabel {#1}{{#2}{\thepage}{#2}{#1}{}{}}
    }
}
\newcommand{\customlabelresponse}[2]{%
   \protected@write \@auxout {}{
   \string \newlabel {#1}{{#2}{\thepage}{#2}{#1}{}} 
   }%
   \hypertarget{#1}{#2}
}
\def\BibTeX{{\rm B\kern-.05em{\sc i\kern-.025em b}\kern-.08em
    T\kern-.1667em\lower.7ex\hbox{E}\kern-.125emX}}
\let\debugmode\undefined %% set to false
\def\debugmode{} %% set to true
\definecolor{rev_color}{RGB}{214, 0, 110}
\newcommand{\zheinline}[1]{{\color{red} (Zhe: #1)}}
\newcommand{\zhe}[1]{\marginpar{\hspace{-10pt}\color{red}\tiny\ttfamily Zhe: #1}}
\newcommand{\necinline}[1]{{\color{green} (Nec: #1)}}
\newcommand{\nec}[1]
{\marginpar{\hspace{-10pt}\color{green}\tiny\ttfamily Nec: #1}}
\newcommand{\MF}{MOP\xspace}
\definecolor{newContent}{RGB}{0, 0, 0}
\newcommand{\zheinline}[1]{}
\newcommand{\zhe}[1]{}
\newcommand{\necinline}[1]{}
\newcommand{\nec}[1]{}
\newcommand{\tsfm}{\texttt{TF}}
\begin{document}
\title{
Can Transformers Learn Optimal Filtering for Unknown Systems?
}

\author{
    % Haldun Balim, \IEEEmembership{Student Member, IEEE}, Zhe Du,  \IEEEmembership{Member, IEEE}, Samet Oymak, \IEEEmembership{Member, IEEE}, \\ and Necmiye Ozay, \IEEEmembership{Senior Member, IEEE} %
    Haldun Balim \hspace{2em} Zhe Du \hspace{2em} Samet Oymak \hspace{2em} Necmiye Ozay
    \thanks{Manuscript received August 15, 2023. N. Ozay and H. Balim were funded by the ONR grant N00014-21-1-2431 (CLEVR-AI) and NSF Grant CNS 1931982. Z. Du and S. Oymak were funded by the Army Research Office grant W911NF2110312. N. Ozay and S. Oymak were also funded by a MIDAS PODS grant.}
    \thanks{First two authors contributed equally.}
    \thanks{Z. Du is with Department of Mechanical Engineering, University of California, Riverside, CA 92521 USA (e-mail: zhedu@umich.edu).}
    \thanks{H. Balim is with Department of Mechanical and Process Engineering, ETH Zürich, 8092 Zürich, Switzerland (e-mail: hbalim@ethz.ch)}
    \thanks{S. Oymak and N. Ozay are with Electrical Engineering and Computer Science Department, University of Michigan, Ann Arbor, MI 48109 USA (e-mail: \{oymak, necmiye\}@umich.edu) }
}
  
\maketitle
\thispagestyle{empty}

%%%%%%%%%%%%%%%%%%%%%%%%%%%%%%%%%%%%%%%%%%%%%%%%%%%%%%%%%%%%%%%%%%%%%%%%%%%%%%%%
\begin{abstract}
Transformer models have shown great success in natural language processing; however, their potential remains mostly unexplored for dynamical systems. In this work, we investigate the optimal output estimation problem using transformers, which generate output predictions using all the past ones. Particularly, we train the transformer using various distinct systems and then evaluate the performance on unseen systems with unknown dynamics. Empirically, the trained transformer adapts exceedingly well to different unseen systems and even matches the optimal performance given by the Kalman filter for linear systems. In more complex settings with non-i.i.d. noise, time-varying dynamics, and nonlinear dynamics like a quadrotor system with unknown parameters, transformers also demonstrate promising results.  To support our experimental findings, we provide statistical guarantees that quantify the amount of training data required for the transformer to achieve a desired excess risk. Finally, we point out some limitations by identifying two classes of problems that lead to degraded performance, highlighting the need for caution when using transformers for control and estimation.
\end{abstract}

\begin{IEEEkeywords}
Filtering, Neural networks, Statistical learning
\end{IEEEkeywords}

%%%%%%%%%%%%%%%%%%%%%%%%%%%%%%%%%%%%%%%%%%%%%%%%%%%%%%%%%%%%%%%%%%%%%%%%%%%%%%%%
\section{Introduction}
\label{sec_intro}
\IEEEPARstart{M}{any} control problems such as model predictive control and safety analysis are built upon predictions of system's future trajectories. This prediction (or estimation) problem is well studied and dates back to the classical Kalman filter~\cite{re1960new}, which is optimal for linear systems with Gaussian noise. Methods are also developed for more complex setups, e.g. extended Kalman filter \cite{anderson2012optimal} for nonlinear systems, particle filters \cite{del1997nonlinear} when system dynamics can be sampled, and adaptive filters
and adaptive filters \cite{diniz1997adaptive} 
for unknown systems. Existing methods typically require the knowledge of system dynamics, linearity, time-invariance, or Gaussian noise, which, for more challenging and realistic settings, may yield degraded performance. 

Prediction, on the other hand, in the domain of natural language processing, has witnessed recent success thanks to the transformer models \cite{vaswani2017attention}, which are deep learning architectures that can generate text prediction after feeding into an input text sequence. In this work, we investigate the use of transformers in predicting dynamical system's outputs. 

To begin with, we assume a priori access to a collection of $M$ systems drawn from some distribution $\Dcal_{sys}$ and their respective output trajectories $\{\vy_t \}$. These are referred to as source systems and  trajectories respectively. We then train a transformer using the source trajectories so that after feeding into past outputs $\vy_{0:t-1}$, the transformer is able to produce an estimate $\hat{\vy}_{t}$ of the true output $\vy_{t}$ (as in Fig~\ref{fig_illustration}). During test-time, given a previously unseen system from the same distribution $\Dcal_{sys}$, we feed its observed trajectory to the trained transformer and evaluate its prediction performance. As discussed in \cite{li2023transformers}, in this setting transformer acts like a data-driven adaptive algorithm: given a system, the transformer is able to automatically adapt to it and make predictions by leveraging past data. In the remainder of this paper, we refer to a transformer trained in this way as meta-output-predictor~(\MF). 

\begin{figure}
    \centering    \includegraphics[width=0.9\columnwidth]{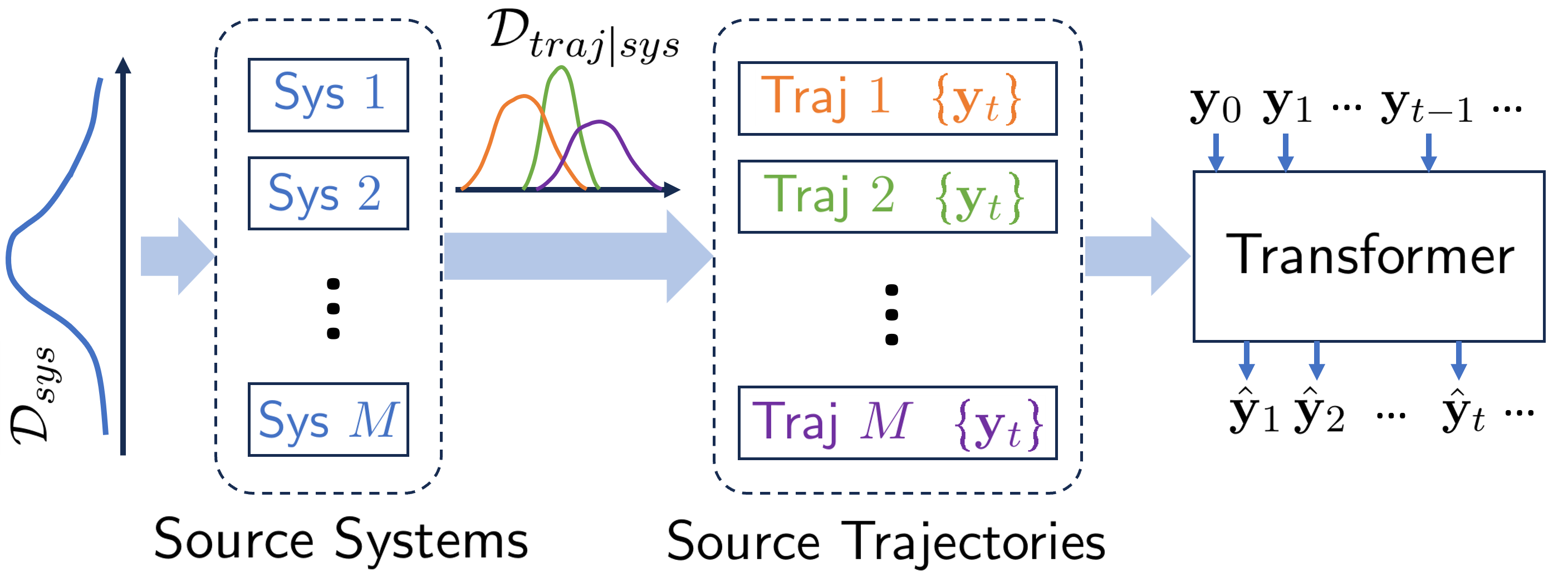}
    \caption{Training a transformer for dynamical system prediction}
    \label{fig_illustration}
    \vspace{-0.4cm}
\end{figure}

\noindent \textbf{Contributions:} 
Our first contribution is numerically demonstrating the capabilities of \MF. The experiments show that \MF matches the optimal performance given by the Kalman filter for different unseen linear systems and is able to handle challenging settings such as non-i.i.d. noise, time-varying dynamics, and nonlinear quadrator systems. Complementing our empirical contributions, we theoretically establish that the excess risk incurred by \MF decays with rate $\Ocal(1/\sqrt{MT})$ where $T$ is the prediction time horizon, under appropriate assumptions. Motivated by our theoretical analysis, we identify a class of systems with slow mixing properties, for which \MF encounters difficulties in learning the optimal estimator. Our experiments also indicate some limitations of \MF in the presence of distribution shifts.

\noindent \textbf{Related Work}:
Compared with earlier neural sequence models, transformers \cite{vaswani2017attention} incorporate the \emph{attention} mechanism that is able to better keep longer memories thus can handle longer input sequences. As a result, a transformer can be trained to perform a variety of tasks rather than a single task \cite{brown2020language, liu2021makes, zhao2021calibrate}, which is known as in-context learning and serves as the foundation of \MF training in our work. Particularly, transformers are shown to be able to in-context learn linear functions \cite{garg2022can}; in-context reinforcement learning is studied in \cite{laskin2022context}. Recent work in~\cite{li2023transformers} studies theoretical properties of transformer-based in-context learning for both i.i.d. data and data with Markovian temporal dependencies (i.e., state trajectories), and provides guarantees in terms of excess risk and transfer risk. Compared with \cite{li2023transformers}, we (i) consider the system output prediction problem with data being non-Markovian, (ii) demonstrate the versatility of \MF through evaluations on several challenging scenarios, and (iii) study scenarios that can lead to degradations in \MF performance.

In terms of filtering/prediction for dynamical systems, there have been many recent advances. When the system dynamics is known, observer design for deterministic systems is studied in \cite{yi2022reduced} through  contraction analysis.
On the other hand, data-driven adaptive methods have received growing attention. For nonlinear system, techniques such as kernel methods \cite{CHEN201695} and nonlinear splines \cite{SCARPINITI2013772} are studied.
Linear system setups allow for more principled methods such as online optimization \cite{anava2013online}, explicit \cite{tsiamis2020sample} or implicit \cite{kozdoba2019line, tsiamis2022online} system identification, and policy optimization \cite{umenberger2022globally}. Given a class of systems, existing works typically propose algorithms, through which a predictor/filter is learned from data for a specific system. This is in contrast to the framework in our work: 
Training \MF with various source systems in a class empower \MF the generalizability to the whole class.
In other words, the learned \MF is not a specific filter, but a prediction algorithm that can filter any system in the class. And as long as the source systems are representative for the system class, the transformer performance is guaranteed, which is no longer confined by common prerequisites such as dynamics linearity, noise Gaussianity, etc.

%%%%%%%%%%%%%%%%%%%%%%%%%%%%%%%%%%%%%%%%%%%%%%%%%%%%%%%%%%%%%%%%%%%%%%%%%%%%%%%%
\section{Problem Setup}\label{sec_problemSetup}
To solve the output estimation problem for an unknown system, we will train a transformer model with data trajectories generated by the following $M$ source systems $\crlbkt{\Scal_i}_{i =1}^M$ drawn from the same distribution $\Dcal_{sys}$:
\begin{equation}\label{eq_contextModel}
 \Scal_i :
 \left\{
 \begin{aligned}
\vx_{i,t+1} &= f_i(\vx_{i,t}) + \vw_{i,t+1} \\
\vy_{i,t} &= g_i(\vx_{i,t}) + \vv_{i,t},
\end{aligned} \right.
\end{equation}
where $\vx_{i,t} \in \dm{n}{1}$ and $\vy_{i,t} \in \dm{m}{1}$ are the state and output at time $t$ in the $i$th system; $f_i(\cdot)$ and $g_i(\cdot)$ are the state dynamics and output functions with $f_i(0) = 0$ and $g_i(0) = 0$; $\vw_{i,t} \sim \N(0, \sigma_{\vw, i}^2 \vI_n)$ and $\vv_{i,t} \sim \N(0, \sigma_{\vv, i}^2 \vI_m)$ are the process and output noise, which are mutually independent for all $i$ and $t$. For simplicity, it is assumed that the initial state $\vx_{i,0} = 0$. These source systems may be obtained through pre-existing datasets or simulation environments.
The target system under evaluation is denoted by $\Scal_0$, which is drawn from the same distribution $\Dcal_{sys}$ and does not have to be contained within the source systems.

We assume that there exists a constant $L_g > 0$ such that for any $i$, $\vx, \vx'$, $\norm{g_i(\vx) - g_i(\vx')} \leq L_g \norm{\vx - \vx'}$. 
Let $\sigma_\vw {:=} \max_i \sigma_{\vw,i}$ and $\sigma_\vv {:=} \max_i \sigma_{\vv,i}$. 
Furthermore, we assume these systems satisfy the following stability condition.
\begin{assumption}[Stability]\label{asmp_stability}
	Let $f_i^{(t)}(\cdot, \cdot)$ denote the $t$-step state evolution function such that $f_i^{(t)}(\vx_{i,\tau}, \vw_{i, \tau+1:\tau+t}) = \vx_{i,\tau + t}$ for all $\tau \in \mathbb{N}$. Then, there exists constants $\rho \in [0,1)$ and $C_\rho>0$ such that for any system $i$ and time step $t$, for any $\vx, \vx'$ and noise sequence $\Wcal:=\crlbkt{\vw_{(\tau)}}_{\tau \in [t]}$, we have
	\begin{equation}
    \label{eq_stability}
		\norm{f_i^{(t)}(\vx, \Wcal) - f_i^{(t)}(\vx', \Wcal)} \leq C_\rho \rho^t \norm{\vx - \vx'}.
	\end{equation}
\end{assumption}

We define the notation $L_\rho := \frac{C_\rho}{1 - \rho}$. When the class of dynamical systems we are sampling from consists of linear systems with $f_i(\vx) = \vA_i \vx$, then Assumption~\ref{asmp_stability} is satisfied when the spectral radius $\rho(\vA_i)<1$ for all $i$. It is also satisfied by systems that are contracting \cite{LOHMILLER1998683} or exponentially incrementally input-to-state stable \cite{Angeli2002alyapunov} with input $\vw$.

In this work, we seek to predict system output using a transformer \cite{vaswani2017attention}, which is a deep sequence model~$\tsfm_\theta(\cdot)$ that maps system output sequences $\Ycal_t := \vy_{0:t}$ to $\vyhat_{t+1}:= \tsfm_\theta(\Ycal_t)$, an estimation of the true output $\vy_{t+1}$ at time $t+1$. The trainable parameters of the transformer are denoted by $\theta \in \Theta$ for some parameter set $\Theta$. The transformer structure allows the sequence length $t$ to be varying. 

Assuming the access to $M$ length-$T$ output trajectories $\{ \vy_{i,0:T} \}_{i = 1}^M$  generated by each of the $M$ source systems, the goal in this work is to train a transformer model that, at each time $t$, can predict the output $\vy_{0,t+1}$ of the target system $S_0$ only using the past outputs $\vy_{0,0:t}$.
Let $\Ycal_{i, t}:= \vy_{i, 0:t}$ denote the outputs up to time $t$, which is also known as the \emph{prompt} (to predict $\vy_{i,t+1}$), then the transformer is trained by solving
\begin{equation}
\label{eq_incontextTraining}
	\widehat{\tsfm} = \underset{\tsfm \in \Acal}{\arg\min} \frac{1}{M T} \sum_{i=1}^M \sum_{t=1}^T \ell(\vy_{i, t}, \tsfm(\Ycal_{i, t-1})),
\end{equation}
where $\Acal := \{ \tsfm_\theta : \theta \in \Theta \}$ and $\ell(\cdot, \cdot) \geq 0$ is the loss function. To apply $\widehat{\tsfm}(\cdot)$ to the target systems $\Scal_0$, we simply take $\widehat{\tsfm}(\Ycal_{0, t})$ as the prediction for $\vy_{0,t+1}$.

Training a model as in \eqref{eq_incontextTraining} where the data comes from a diversity of sources is also known as in-context learning. As a result of the training diversity, the transformer can achieve good performance on any of the source systems as well as demonstrate generalization ability for the unseen target system $\Scal_0$. Hence, we refer to the obtained transformer $\widehat{\tsfm}$ as meta-output-predictor (\MF).

In what follows, we first empirically demonstrate the performance of \MF  in Section~\ref{sec_experiments} under various setups, which is followed by theoretical analysis in Section~\ref{sec_theory}.

\section{Experiments}
\label{sec_experiments}

\begin{figure*}[ht]
    \centering
    \includegraphics[width=0.67\columnwidth]{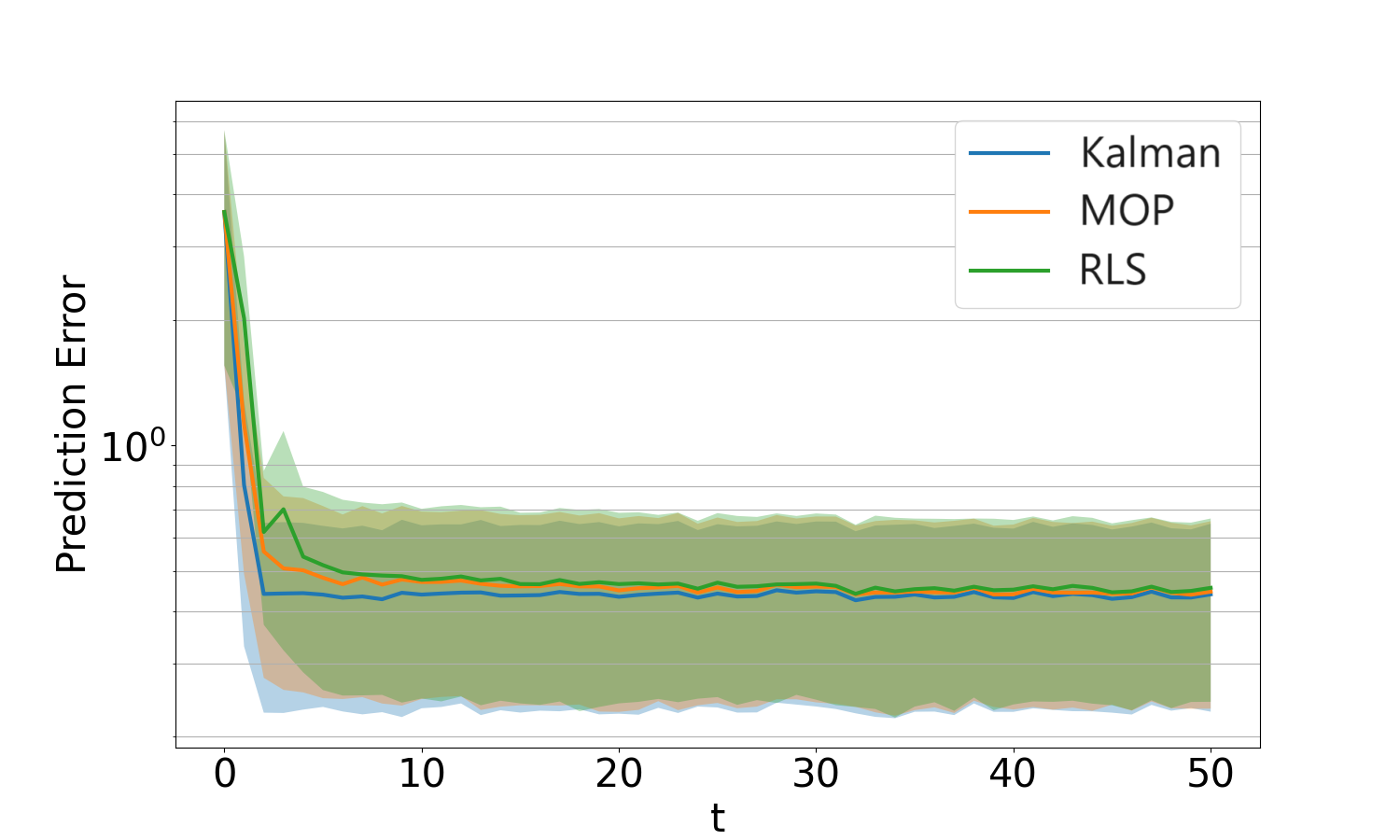}
    \includegraphics[width=0.67\columnwidth]{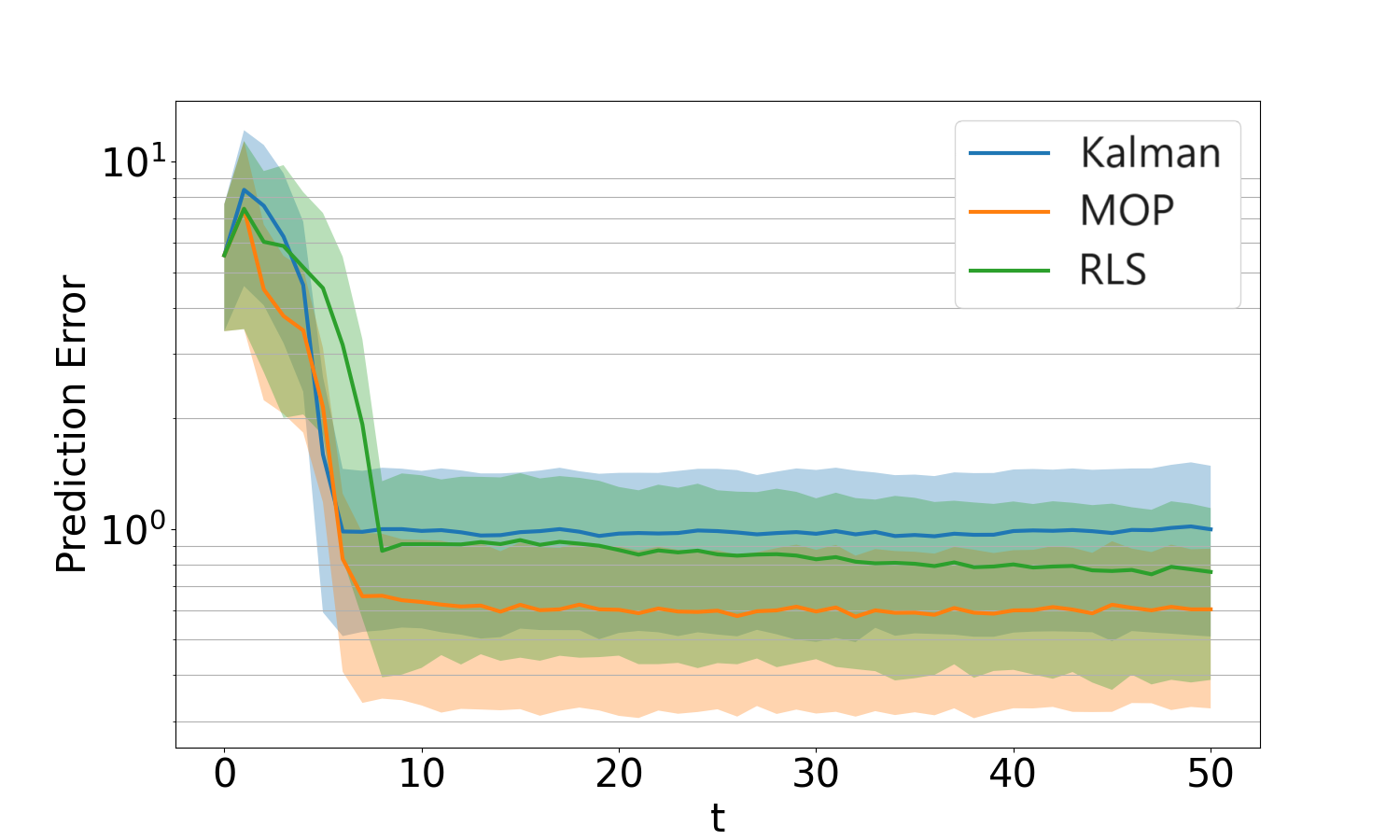}
    \includegraphics[width=0.67\columnwidth]{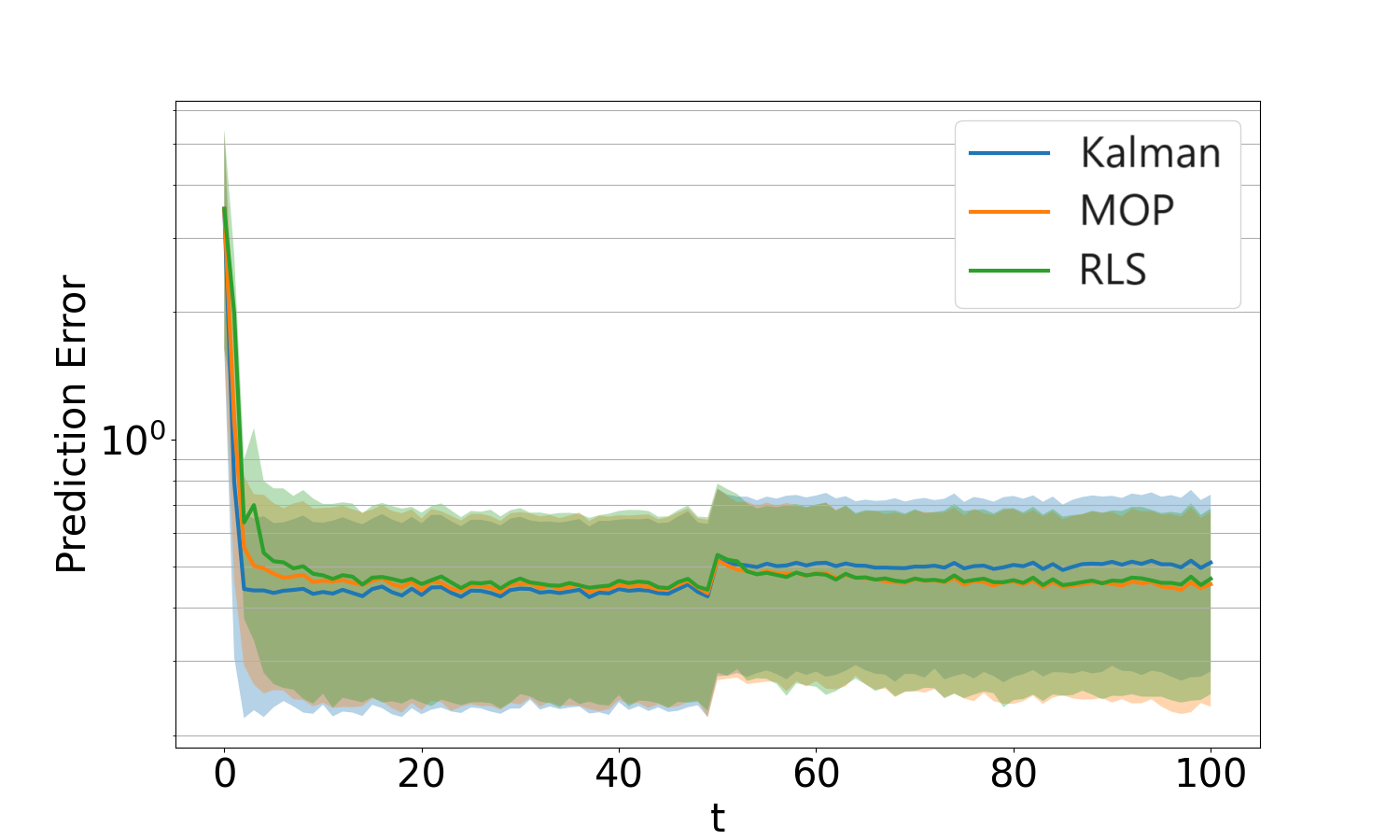}\\    
    \footnotesize \customlabelfig{fig_linSys}{\ref{fig_linear}(a)} (a) \hspace{0.3 \linewidth}
    \customlabelfig{fig_noniid}{\ref{fig_linear}(b)} (b)
    \hspace{0.3 \linewidth}
    \customlabelfig{fig_changingsys}{\ref{fig_linear}(c)} (c)
    \vspace{-1em}
    \caption{Output predictions for linear systems: (a) with i.i.d. Gaussian noise; (b) with colored (non-i.i.d.) noise; (c) with dynamics changes at $t = T/2$. \MF performs as well as or better than Kalman filter even though it does not have access to the system dynamics.}
    \label{fig_linear}
    \vspace{-0.4cm}
\end{figure*}

In this section, we present the experimental results for the transformer-based \MF in different scenarios. In each scenario, during the training, we fix the number of source systems $M=20000$ and training trajectory length $T = 50$. To evaluate the performance of \MF on different unseen test systems, for each experimental setup, we randomly generate $N = 1000$ systems and record the prediction error $\norm{\vyhat_t - \vy_t}$ over trajectories each with length $T = 50$, where $\vyhat_t$ denotes the prediction for $\vy_t$. We use GPT-2 \cite{radford2019language} architecture with 12 layers, 8 attention heads and 256 embedding dimensions. In each experimental setup, the transformer model is trained for $10000$ training steps with batch size $64$. The $\ell_2$-norm is selected as the training loss function. The code we use to produce the figures and execute our algorithm can be accessed at~\url{https://github.com/haldunbalim/Meta-Output-Predictor}

\subsection{Linear Systems} \label{subsec:lin}
We first consider the simplest setting with linear systems and i.i.d. Gaussian noise, i.e., $f(\vx) = \vA \vx$ and $g(\vx) = \vC \vx$ in \eqref{eq_contextModel}. The state dimension is $n = 10$ and the output dimension is $m = 5$. For each source and test system, we generate matrix $\vA$ with entries sampled uniformly between $[0,1]$, which is then followed by scaling so that the largest eigenvalue is $0.95$. The $\vC$ matrix is generated with entries sampled uniformly between $[0,1]$. The noise covariance are $\sigma_{\vw}^2 = 0.01$ and $\sigma_{\vv}^2 =0.01$. Kalman filter and linear autoregressive predictor are used as baselines, where the latter is given by $\vyhat_{t+1} = \valpha_1 \vy_{t} + \valpha_2 \vy_{t-1}$ and the matrix parameters $(\valpha_1, \valpha_2)$ are updated in an online fashion using the recursive least squares (RLS) with initial covariance taken to be identity. This essentially amounts to solving a regularized least squares problem. The results are presented in Fig. \ref{fig_linSys}. We see that after some burn-in time ($\sim$ 20 steps), \MF eventually matches the performance of Kalman filter. This is because the transformer needs to collect certain amount $(\Ocal(n+m))$ of data to implicitly learn the system dynamics, while Kalman filter, designed with the exact system knowledge, reaches optimality immediately.

In the next experiment, we consider the case where the noise process $\crlbkt{\vw_t}$ is non-i.i.d. Specifically, we let $\vw_t = \sum_{t^\prime=t-4}^{t}\veta_{t^\prime}$ and $\vv_t = \sum_{t^\prime=t-4}^{t}\vgamma_{t^\prime}$ where $\veta_{t} \overset{i.i.d.}{\sim} \N(0, 0.01)$ and $\vgamma_{t} \overset{i.i.d.}{\sim} \N(0, 0.01)$. When applying the Kalman filter in this case, we disregard the fact that $\vw_t$ and $\vv_t$ each are temporally correlated and simply use  the variances of $\vw_t$ and $\vv_t$ for prediction.
We note that for non-i.i.d. noise Kalman filter is no longer optimal. Fig.~\ref{fig_noniid} shows the results for this case. We can observe the advantage of \MF over Kalman filter as Kalman filter has lost its optimality whereas \MF has learned the non-i.i.d noise prior during training.

Next, we evaluate the ability of \MF to adapt to run-time changes in the dynamics. Specifically, when generating the test trajectories, we change the underlying dynamics to a randomly generated new one at time $t = T/2 = 50$. The results are presented in Fig. \ref{fig_changingsys}. We see that when dynamics changes occur, there are sudden jumps in prediction error for both \MF and the Kalman filter; as we collect more data from the new dynamics, \MF quickly adapts, and achieves the same performance as before at around $t = 100$. The convergence of \MF after dynamics changes is much slower than the one at the beginning because the prompt always contains data from the original system.

\subsection{Planar Quadrotor Systems}

We consider the underactuated 6D planar quadrotor systems as in \cite{singh2019robust} with the following discrete-time dynamics:
\begin{equation*}\small{
\begin{aligned}
    \underbrace{\begin{bmatrix} x_{t+1} \\ z_{t+1} \\  \phi_{t+1} \\ \dot x_{t+1} \\ \dot z_{t+1} \\ \dot \phi_{t+1}\end{bmatrix}}_{=: \vx_{t+1}} 
    &= 
    \begin{bmatrix} x_t + (\dot x_t \cos(\phi_t) - \dot z_t \sin(\phi_t)) \tau \\ 
    z_t + (\dot x_t \sin(\phi_t) + \dot z_t \cos(\phi_t)) \tau\\
    \phi_t + \dot \phi_t \tau \\
    \dot x_t + (\dot z_t \dot \phi_t - g \sin(\phi_t))\tau \\
    \dot z_t + (-\dot x_t \dot \phi_t - g \cos(\phi_t) + (u_{0_t}+u_{1_t})/m)\tau \\
    (u_{0_t}-u_{1_t})l\tau/J
    \end{bmatrix}
    +  w_{t} \\
    y_t &= \vC \vx_t + v_t.
\end{aligned}}
\end{equation*} 
The mass, length and moment of inertia parameters $(m,l,J)$ are chosen uniformly from $[0.5,2]$, $g$ is set to be constant 10. For each system a trajectory is generated by randomly sampled actions. The noise $w,v$ are sampled from $N(0,0.01)$. The discretization time $\tau=0.1$. The matrix $\vC \in \dm{3}{6}$ has elements uniformly sampled in $[0,1]$. The results are provided in Fig.~\ref{fig_drone}. We see that \MF significantly outperforms the extended Kalman filter.

\begin{figure}
    \centering    \includegraphics[width=0.7\columnwidth]{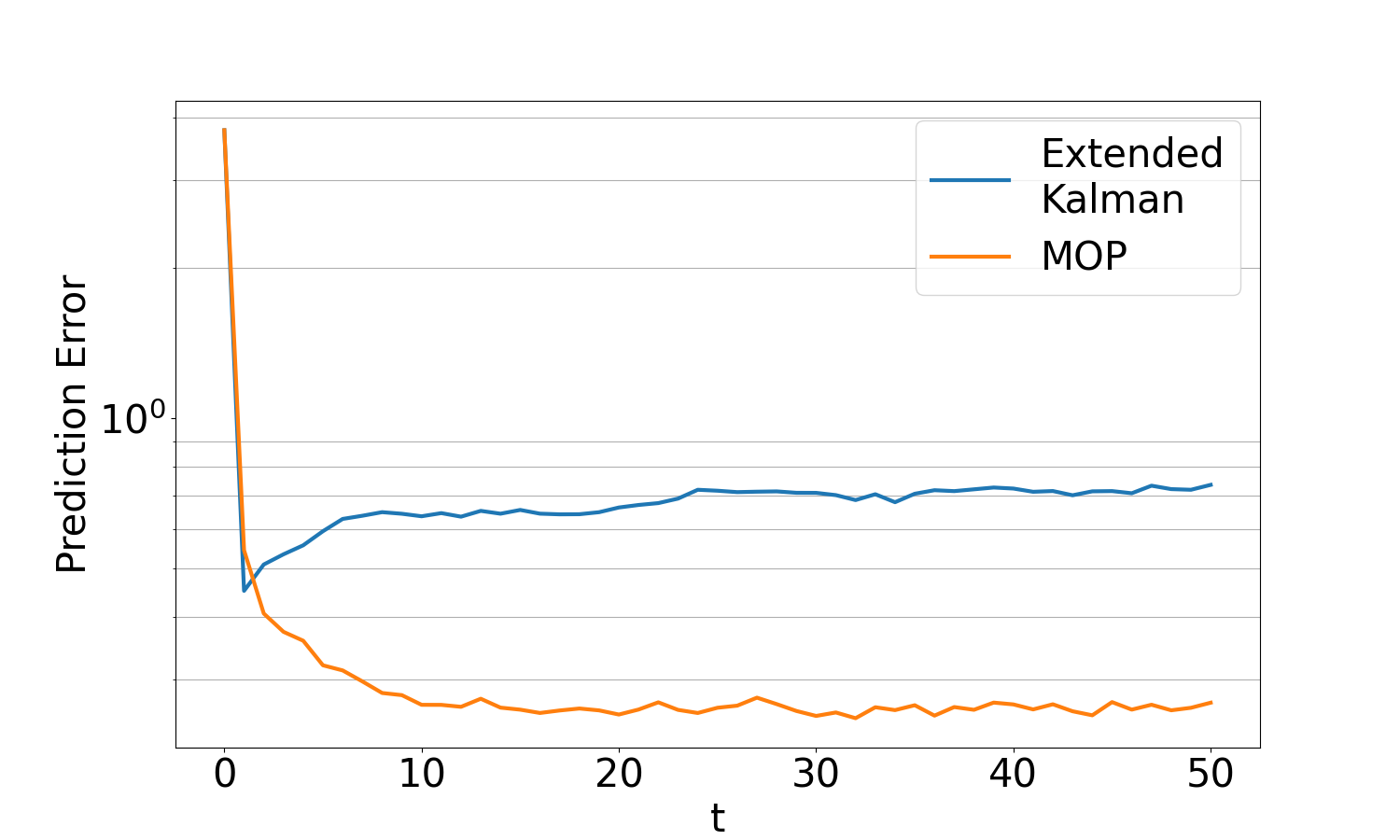}
    \vspace{-1em}
    \caption{Output predictions for planar quadrotor systems.}
    \label{fig_drone}
    \vspace{-0.4cm}
\end{figure}

\section{Theoretical Guarantees}
\label{sec_theory}
Before analyzing the performance of \MF $\widehat{\tsfm}$, we first introduce a few notions and assumptions. The analysis in this section generalizes that in \cite{li2023transformers}, which studies a special case where state is observed (i.e.,  $g$ is known and equal to the identity map and there is no measurement noise).
\subsection{Preliminaries}
\begin{definition}[Covering Number]
\label{def_coveringNum}
	Consider a set $\Qcal$ and a distance metric $d(\cdot, \cdot)$ on $\Qcal$.
	For a set $\bar{\Qcal}_N := \crlbkt{q_1,\dots,q_N}$, we say it is an $\epsilon$-cover of $\Qcal$ if for any $q \in \Qcal$, there exists $q_i \in \bar{\Qcal}_N$ such that $d(q, q_i) \leq \epsilon$.
	The number $\N(\Qcal, d, \epsilon)$ is the smallest $N \in \mathbb{N}$ such that $\bar{\Qcal}_N$ is an $\epsilon$-cover of $\Qcal$.
\end{definition}

We will analyze the set $\Acal$ through its $\epsilon$-cover. To do so, we define the following distance on $\Acal$.

\begin{definition}[Distance Metric]
	\label{def_algDistance}
    Let $\Ycal_T := \vy_{0:T}$ denote a 
     trajectory of some system  
    $i$ under the noise sequence $\crlbkt{\vw_{0:T}, \vv_{0:T}}$.
    For any two transformers $\tsfm, \tsfm' \in \Acal$, define the distance metric 
    $\mu(\tsfm, \tsfm') := \sup_{t \leq T} \sup_{\vw_{0:t}, \vv_{0:t}}
    \frac{\norm{\tsfm(\Ycal_{t-1}) - \tsfm'(\Ycal_{t-1})}}{
    \max_{\tau \leq t} \norm{\vw_{\tau-1}} + \max_{\tau \leq t} \norm{\vv_{\tau-1}}
    }$.
\end{definition}

Though this metric is regarding the transformers $\tsfm, \tsfm'$ in the transformer space $\Acal$, it can be viewed as a metric between their respective parameters $\theta, \theta'$ in the parameter set $\Theta$. 
When the noise is bounded, the distance can also be defined without using the normalization factor in the denominator, e.g.~\cite{li2023transformers}.
Next, we quantify the robustness of the transformer in terms of how much the  prediction changes with respect to the perturbations of its prompt. This will help us establish generalization bounds for the transformer trained in \eqref{eq_incontextTraining}.

\begin{assumption}[Transformer Robustness]
\label{asmp_tsfmRobustness}
	Consider a trajectory $\crlbkt{\vy_{0:t}}$ generated by some system $i$ under the noise sequence $\crlbkt{\vw_{0:t}, \vv_{0:t}}$. Let $\crlbkt{ \vy'_{0:t}}$ denote another trajectory under the same noise except that at time $\tau$, $\crlbkt{\vw_\tau, \vv_\tau}$ is replaced by  $\crlbkt{\vw_\tau', \vv_\tau'}$. 
    Let $\Ycal_{t} := \vy_{0:t}$ and $\Ycal'_{t} := \vy'_{0:t}$.
    Suppose the loss function $\ell(\vy, \cdot)$ is $L_\ell$-Lipschitz.
    Let $\Bcal:= \cap_{j=0}^t \crlbkt{\norm{\vw_j} \leq \wbar, \norm{\vv_j} \leq \vbar}$ for some $\wbar, \vbar \geq 0$ and $\expctn'[\cdot] := \expctn[\cdot \mid \Bcal]$.    
    Then, there exist constants $K \geq 0$ such that for any system $i$, any $t$ and $\crlbkt{\vw_{0:t-1}, \vv_{0:t-1}}$, any $\tau<t$ and $\crlbkt{\vw_\tau', \vv_\tau'}$, and any $\tsfm \in \Acal$, we have
	\begin{multline*}
        \expctn'_{\vw_{t}, \vv_{t}}\big[
		\big| \ell(\vy_{t}, \tsfm(\Ycal_{t-1})) - \ell(\vy_{t}, \tsfm(\Ycal'_{t-1}))
		\big| 	\big] \\ \leq
		\frac{K L_\ell}{t-\tau} \sum_{j = \tau}^{t-1} \norm{\vy_{j} - \vy'_{j}}.
	\end{multline*}
\end{assumption}

In this assumption, trajectories $\vy'_j = \vy_j$ for $j<\tau$ and possibly differ afterward due to the perturbation at time $\tau$, which explains the summation term in the upper bound. It is shown in \cite[Lemma B.5]{li2023transformers} that this assumption holds for a wide class of transformers.

\subsection{Performance Guarantees}

For a transformer $\tsfm \in \Acal$, we define the following risk to evaluate its performance on the target system $S_0$ over the time horizon $T$
\begin{equation}
\label{eq_risk}
	\Lcal(\tsfm) := \frac{1}{T} \sum_{t = 1} ^T\expctn \sqbktbig{ \ell(\vy_{0,t}, \tsfm(\Ycal_{0,t-1})) },
\end{equation}
where the expectation is over the target system $S_0$ and noise terms $\crlbkt{\vw_{0, t}, \vv_{0, t}}$. Let $\tsfm^\star \in \Acal$ denote an optimal transformer that minimizes $\Lcal(\tsfm)$. 
Define the excess risk for $\widehat{\tsfm}$ obtained via minimizing the loss in \eqref{eq_incontextTraining} as
\begin{equation}\label{eq_excessRisk}
	\text{Risk}(\widehat{\tsfm}) := \Lcal(\widehat{\tsfm}) - \Lcal(\tsfm^\star).
\end{equation}
Then, we have the following performance guarantees on  $\widehat{\tsfm}$.
\begin{theorem}
\label{thrm_excessRisk}
	Suppose Assumptions~\ref{asmp_stability} and \ref{asmp_tsfmRobustness} hold, and the loss function $\ell(\vy, \cdot)$ is $L_\ell$-Lipschitz and $\ell(\cdot, \cdot) \leq B$ for some $B \geq 0$. Then, when $MT \geq 3 \max(\sqrt{n}, \sqrt{m})$, for all $\epsilon>0$, with probability at least $1-\delta$,
    \begin{equation*}
        \text{Risk}(\widehat{\tsfm}) 
        \leq 12B\delta + 4 L_\ell \epsilon + \Bbar \sqrt{\frac{\log(4 \N(\Acal, \mu, \epsilon') /\delta)}{c MT}},
    \end{equation*}
	where $\Bbar{:=} 2B + 7 K L_\ell \prnthbig{L_g L_\rho \sigma_\vw {+} \sigma_\vv } \sqrt{\log(4MT/\delta)}\log(T)$; $\epsilon' := \epsilon/((\sigma_\vw + \sigma_\vv) \sqrt{\log(4MT/\delta)})$; $c$ is some absolute constant; $\N(\cdot, \cdot, \cdot)$ and $\mu$ are the covering number and distance metric from Definitions~\ref{def_coveringNum} and \ref{def_algDistance}.
\end{theorem}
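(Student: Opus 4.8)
The plan is the standard ERM decomposition. Writing $\widehat{\Lcal}(\tsfm)$ for the empirical objective in \eqref{eq_incontextTraining} (read at horizon $T$, i.e.\ taking $T^{tr}=T$; the general case follows by the same argument), we split
\[
\text{Risk}(\widehat{\tsfm}) \le \big[\Lcal(\widehat{\tsfm}) - \widehat{\Lcal}(\widehat{\tsfm})\big] + \big[\widehat{\Lcal}(\widehat{\tsfm}) - \widehat{\Lcal}(\tsfm^\star)\big] + \big[\widehat{\Lcal}(\tsfm^\star) - \Lcal(\tsfm^\star)\big].
\]
The middle bracket is $\le 0$ because $\widehat{\tsfm}$ minimizes $\widehat{\Lcal}$ over $\Acal\ni\tsfm^\star$, and $\expctn\widehat{\Lcal}(\tsfm)=\Lcal(\tsfm)$ for every fixed $\tsfm$, so the two outer brackets are genuine concentration terms. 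I would reduce everything to a single-transformer deviation bound plus a union bound over an $\epsilon$-net of $\Acal$ in the metric $\mu$: given a cover $\{\tsfm_k\}_{k\le\N(\Acal,\mu,\epsilon)}$ and $\tsfm_k$ with $\mu(\tsfm,\tsfm_k)\le\epsilon$, the $L_\ell$-Lipschitzness of $\ell(\vy,\cdot)$ together with the fact that $\mu$ is exactly a (sup over trajectories of a) normalized average of prediction distances gives $|\widehat{\Lcal}(\tsfm)-\widehat{\Lcal}(\tsfm_k)|\le L_\ell\epsilon$ and $|\Lcal(\tsfm)-\Lcal(\tsfm_k)|\le L_\ell\epsilon$. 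Applying this to both $\widehat{\tsfm}$ and $\tsfm^\star$ yields $\text{Risk}(\widehat{\tsfm})\lesssim 4L_\ell\epsilon + \max_k|\widehat{\Lcal}(\tsfm_k)-\Lcal(\tsfm_k)|$, and a union bound over the cover then produces the $\log\N(\Acal,\mu,\epsilon)$ term; taking $\inf_{\epsilon>0}$ gives the stated form.

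\textbf{The core estimate.} The heart of the argument is showing, for a fixed $\tsfm$, that $|\widehat{\Lcal}(\tsfm)-\Lcal(\tsfm)|\lesssim \Bbar\sqrt{\log(1/\delta')/(MT)}$ with probability $1-\delta'$. I would view $\widehat{\Lcal}(\tsfm)$ as a function of the independent Gaussian excitations $\{\vw_{i,\tau},\vv_{i,\tau}\}_{i\in[M],\tau\in[T]}$ and run a bounded-differences (McDiarmid-type) argument; the key claim is that replacing a single pair $(\vw_{i,\tau},\vv_{i,\tau})$ changes $\widehat{\Lcal}(\tsfm)$ by at most $\lesssim\Bbar/(MT)$. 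Such a replacement perturbs only the $i$-th trajectory and only at times $\ge\tau$: Assumption~\ref{asmp_stability} gives $\norm{\vx_{i,j}-\vx'_{i,j}}\le C_\rho\rho^{j-\tau}\norm{\vw_{i,\tau}-\vw'_{i,\tau}}$ for $j\ge\tau$, hence via the $L_g$-Lipschitz $g$ we get $\norm{\vy_{i,j}-\vy'_{i,j}}\le L_gC_\rho\rho^{j-\tau}\norm{\vw_{i,\tau}-\vw'_{i,\tau}}$ (plus a $\norm{\vv_{i,\tau}-\vv'_{i,\tau}}$ contribution at $j=\tau$), and this sums over $j$ to a geometric series bounded by $L_gL_\rho\norm{\vw_{i,\tau}-\vw'_{i,\tau}}+\norm{\vv_{i,\tau}-\vv'_{i,\tau}}$ uniformly in the horizon. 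Feeding this into Assumption~\ref{asmp_tsfmRobustness} and summing the per-step loss changes over the prediction times $t=\tau,\dots,T$, the weights $\tfrac{K}{t-\tau}$ collapse to a harmonic sum $\sum_t\tfrac1{t-\tau}\lesssim\log T$; together with the crude $\ell(\cdot,\cdot)\le B$ bound for the one ``label-change'' term (the current label's own noise, at $t=\tau-1$), this gives per-pair sensitivity of order $\big(B+K(L_gL_\rho\sigma_\vw+\sigma_\vv)\sqrt{\log(4MT/\delta)}\,\log T\big)/(MT)$, valid on the high-probability event $\mathcal{E}$ that all excitations satisfy $\norm{\vw_{i,\tau}}\lesssim\sigma_\vw\sqrt{\log(MT/\delta)}$-type bounds (the mild hypothesis $MT\ge 3\max(\sqrt n,\sqrt m)$ is what lets the Gaussian-norm dimension factor be folded into the logarithmic term, and the constant $7$ in $\Bbar$ absorbs the $O(1)$ factors).

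\textbf{Finishing.} With $MT$ noise pairs each of sensitivity $\lesssim\Bbar/(MT)$, the sum of squared sensitivities is $\lesssim\Bbar^2/(MT)$, so McDiarmid gives $|\widehat{\Lcal}(\tsfm)-\Lcal(\tsfm)|\lesssim\Bbar\sqrt{\log(1/\delta')/(MT)}$; since the Gaussians are unbounded one must use a version tolerant of $\mathcal{E}^c$ (e.g.\ condition on $\mathcal{E}$ and add $\prob(\mathcal{E}^c)\cdot B$, which is negligible after union bounding — this is the source of one of the factors of $2$ inside the logarithm, the other being the two-sided bound). Union-bounding the single-transformer deviation over the $\N(\Acal,\mu,\epsilon)$ cover elements, combining with the $4L_\ell\epsilon$ covering error, and taking $\inf_{\epsilon>0}$ delivers the theorem.

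\textbf{Main obstacle.} The delicate point is the core estimate, specifically \emph{why} the per-pair sensitivity is $O(1/(MT))$ rather than $O(1/M)$ — this is exactly where mixing (Assumption~\ref{asmp_stability}) is indispensable: without geometric contraction of the downstream states, a single noise perturbation would influence all later predictions comparably, the per-pair sensitivity would be $\Omega(\log T/M)$ (or worse, growing with $T$), the sum of squared sensitivities would degrade to $\Omega(1/M)$, and the $\Ocal(1/\sqrt{MT})$ rate would be lost. Correctly chaining the geometric decay of Assumption~\ref{asmp_stability} (a bounded geometric sum), the per-step robustness of Assumption~\ref{asmp_tsfmRobustness} with its $\tfrac{K}{t-\tau}$ weights (a harmonic sum, hence $\log T$), and the high-probability control of the unbounded Gaussian magnitudes is the main bookkeeping challenge; a secondary subtlety is aligning the training horizon $T^{tr}$ with the arbitrary horizon $T$ appearing in $\Lcal$, which I would handle by taking them equal.
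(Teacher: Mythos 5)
Your proposal is correct and follows essentially the same route as the paper: the same ERM decomposition, the same key sensitivity estimate (chaining Assumption~\ref{asmp_stability}'s geometric decay through the $L_g$-Lipschitz output map into Assumption~\ref{asmp_tsfmRobustness}'s $K/(t-\tau)$ weights to get the $\log T$ harmonic sum, exactly the content of the paper's Lemma~\ref{lemma_tsfmRobustness_new}), the same conditioning on a bounded-noise event, and the same covering-number union bound. The only cosmetic difference is that you run one McDiarmid bounded-differences argument over all $MT$ noise pairs at once, whereas the paper runs a Doob martingale with Azuma--Hoeffding within each trajectory and then aggregates the $M$ independent sub-Gaussian deviations; these are the same technique organized in two steps rather than one.
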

 
For fixed failure probability $\delta$ and distance $\epsilon$, the upper bound decays with rate $\Ocal(1/\sqrt{MT})$. When the transformer mapping is Lipschitz, the covering number term can be upper bounded by $\log \N(\Acal, \mu, \epsilon') \leq \Ocal(n_{\Theta} \log(\bar{\Theta} \sqrt{n_{\Theta}} / \epsilon'))$, where $n_{\Theta}$ and $\bar{\Theta}$ are respectively the dimension and magnitude of the transformer parameter set $\Theta$.

\subsection{Proof of the Main Theorem}
In this section, we provide the proof for Theorem~\ref{thrm_excessRisk}. 
Extending Assumption~\ref{asmp_tsfmRobustness}, the following lemma tells how the noise $\{\vw, \vv\}$ would affect the loss performance.
\begin{lemma}
\label{lemma_tsfmRobustness_new}
    Suppose Assumptions~\ref{asmp_stability} and \ref{asmp_tsfmRobustness} hold. Under the same setup as in Assumption~\ref{asmp_tsfmRobustness}, let 
    $\ybar := L_g L_\rho \wbar + \vbar$. Then,
	\begin{equation*}
		\expctn'_{\vw_{t}, \vv_{t}} \big[ \big|
        \ell(\vy_{t}, \tsfm(\Ycal_{t-1})) - \ell(\vy_{t}, \tsfm(\Ycal'_{t-1}))
		\big| \big] 
		\leq
		\frac{2 K L_\ell \ybar}{t-\tau}.
	\end{equation*}  
\end{lemma}
\begin{proof}
From Assumption~\ref{asmp_tsfmRobustness}, it only suffices to show $\sum_{j = \tau}^{t-1} \norm{\vy_{j} - \vy'_{j}} \leq 2\ybar$.
In Assumption~\ref{asmp_tsfmRobustness}, as a result of perturbing the noise sequence $\crlbkt{\vw_{0:t}, \vv_{0:t}}$ at time $\tau$, the original sequence $\crlbkt{\vx_{0:t}, \vy_{0:t}}$ and the perturbed sequence $\crlbkt{\vx'_{0:t}, \vy'_{0:t}}$ are the same up to time $\tau - 1$ and possibly differ afterward. For $j = \tau, \cdots, t$, according to the stability in Assumption~\ref{asmp_stability}, we have $\norm{\vx_j - \vx_j'} \leq C_\rho \rho^{j - \tau} \norm{\vw_\tau - \vw_\tau'}$. 
Since, for all $i$, $g_i(\cdot)$ is assumed Lipschitz, for $j = \tau$, we have $\norm{\vy_j - \vy_j'} \leq L_g \norm{\vx_j - \vx_j'} + \norm{\vv_\tau - \vv_\tau'} \leq L_g C_\rho \rho^{j - \tau} \norm{\vw_\tau - \vw_\tau'} + \norm{\vv_\tau - \vv_\tau'}$; similarly for $j > \tau$, we have $\norm{\vy_j - \vy_j'} \leq L_g C_\rho \rho^{j - \tau} \norm{\vw_\tau - \vw_\tau'}$. Taking the summation gives that $\sum_{j = \tau}^{t-1} \norm{\vy_{j} - \vy'_{j}} \leq L_g L_\rho \norm{\vw_\tau - \vw_\tau'} + \norm{\vv_\tau - \vv_\tau'} \leq 2(L_g L_\rho \wbar + \vbar) = 2 \ybar$.
\end{proof}

\begin{proof}[Proof for Theorem~\ref{thrm_excessRisk}]
To bound the excess risk $\text{Risk}(\widehat{\tsfm}) := \Lcal(\widehat{\tsfm}) - \Lcal(\tsfm^\star)$ in \eqref{eq_excessRisk}, we first define the following empirical risk on the $M$ source systems
\begin{equation}\label{eq_empiricalRisk}
    \hat{\Lcal}(\tsfm) := \frac{1}{MT}\sum_{i=1}^M \sum_{t=1}^T  \underbrace{\ell(\vy_{i,t}, \tsfm(\Ycal_{i,t-1}))}_{=:\ell_{i,t}},
\end{equation}
Noticing that $\widehat{\tsfm} = \arg \min_{\tsfm \in \Acal} \hat{\Lcal}(\tsfm)$, the decomposition $\text{Risk}(\widehat{\tsfm}) = (\Lcal(\widehat{\tsfm}) - \hat{\Lcal}(\widehat{\tsfm})) + ( \hat{\Lcal}(\widehat{\tsfm}) - \hat{\Lcal}(\tsfm^\star)) + (\hat{\Lcal}(\tsfm^\star) - \Lcal(\tsfm^\star))$ becomes $\text{Risk}(\widehat{\tsfm}) \leq (\Lcal(\widehat{\tsfm}) - \hat{\Lcal}(\widehat{\tsfm})) + (\hat{\Lcal}(\tsfm^\star) - \Lcal(\tsfm^\star))$. This further gives
\begin{equation}
\label{eq_unnamed6}
\textstyle
    \text{Risk}(\widehat{\tsfm}) \leq 2 \sup_{\tsfm \in \Acal} |\hat{\Lcal}(\tsfm) - \Lcal(\tsfm)|.
\end{equation}
In the following, we proceed as follows: (i) assume the noise sequence $\crlbkt{\vw_{i,t}, \vv_{i,t}}$ is bounded and show that for any $\tsfm \in \Acal$, $|\hat{\Lcal}(\tsfm) - \Lcal(\tsfm)|$ is bounded; (ii) use a covering number argument to bound $\sup_{\tsfm \in \Acal} |\hat{\Lcal}(\tsfm) - \Lcal(\tsfm)|$; (iii) show $\crlbkt{\vw_{i,t}, \vv_{i,t}}$ can be bounded with high probability.

\noindent \textbf{Step (i)}: Upper bound $|\hat{\Lcal}(\tsfm) - \Lcal(\tsfm)|$

Define the following risks for the system $i = 0,1,\dots,M$
\begin{equation*}
    \textstyle
    \hat{\Lcal}_i(\tsfm) := T^\inv \sum \nolimits_{t = 1}^T \ell_{i,t},
    \quad
    \Lcal_i(\tsfm) := T^\inv \sum \nolimits_{t = 1}^T \expctn[\ell_{i,t}].
\end{equation*}
This gives $\hat{\Lcal}(\tsfm) {=} M^\inv\sum_{i=1}^M \hat{\Lcal}_i(\tsfm) $ and $\Lcal(\tsfm) {=} \Lcal_0(\tsfm)$ ${=} M^\inv\sum_{i=1}^M \Lcal_i(\tsfm)$ since $\Lcal_i(\tsfm)$ i.i.d. for all $i$. We then have $|\hat{\Lcal}(\tsfm) - \Lcal(\tsfm)| = |M^\inv \sum_{i = 1}^M (\hat{\Lcal}_i(\tsfm) - \Lcal_i(\tsfm))|$. We will bound each individual $\hat{\Lcal}_i(\tsfm) - \Lcal_i(\tsfm)$ and then apply concentration result to bound $|\hat{\Lcal}(\tsfm) - \Lcal(\tsfm)|$. 

Define the event $\Bcal_M := \cap_{i = 0}^M \cap_{j = 0}^T \crlbkt{\norm{\vw_{i,j}} \leq \wbar, \norm{\vv_{i,j}} \leq \vbar}$ for some $\wbar, \vbar \geq 0$, and Let $\prob'(\cdot) := \prob(\ \cdot \mid \Bcal_M)$ and $\expctn'[\cdot] := \expctn[\ \cdot \mid \Bcal_M]$ denote the probability measure and expectation conditioning on the event $\Bcal_M$.
Let $\Scal_{i,t} := \crlbkt{\vw_{i,0:t}, \vv_{i,0:t}}$ for $t \geq 1$ and $\Scal_{i,0} := \phi$.
Define $X_{i,t} := \expctn'[\hat{\Lcal}_i(\tsfm) \mid \Scal_{i,t}]$, then the process $\crlbkt{X_{i,t}}_{t=0}^{T}$ forms a Doob's martingale. Particularly, note that $X_{i,T} = \hat{\Lcal}_i(\tsfm)$ and $X_{i,0} = \expctn'[\hat{\Lcal}_i(\tsfm)]$. Consider the martingale difference $X_{i,\tau} - X_{i, \tau-1}$, we have
\begin{equation*}
\begin{split}
    &|X_{i,\tau} - X_{i, \tau-1}| \\
    = &\Big| T^\inv \textstyle\sum \nolimits_{t = 1}^T \expctn'[\ell_{i,t} \mid \Scal_{i,\tau}]  - \expctn'[\ell_{i,t} \mid \Scal_{i,\tau-1}] \Big| \\
    = &\Big| T^\inv \textstyle\sum \nolimits_{t = \tau}^T \expctn'[\ell_{i,t} \mid \Scal_{i,\tau}]  -  \expctn'[\ell_{i,t} \mid \Scal_{i,\tau-1}] \Big| \\
    \leq & B/T + T^\inv \textstyle\sum \nolimits_{t = \tau+1}^T \Big| \expctn'[\ell_{i,t} \mid \Scal_{i,\tau}]  - \expctn'[\ell_{i,t} \mid \Scal_{i,\tau-1}]\Big|,
\end{split}    
\end{equation*}
where the last line used the fact $\ell(\cdot, \cdot) \leq B$.
Note that each summand in the above summation can be upper bounded by $\frac{2K L_\ell \ybar}{t-\tau}$ according to Lemma~\ref{lemma_tsfmRobustness_new}. The equation above then gives
$|X_{i,\tau} - X_{i, \tau-1}| \leq T^\inv \prnthbig{B + 2K L_\ell \ybar \log(T) }$.
With this martingale difference bound, applying the Azuma-Hoeffding's inequality to $\{X_{i,t}\}_{t=0}^T$ gives
\begin{equation*}
\begin{split}
    \prob' (|X_{i,T} - X_{i,0}| \geq \epsilon)
    =&\prob (|\hat{\Lcal}_i(\tsfm) - \Lcal_i(\tsfm) - \Delta_i | \geq \epsilon  \mid \Bcal_M) \\
    \leq & 2 e^{ - \frac{T \epsilon^2}{(B + 2K L_\ell \ybar \log(T) )^2}}.
\end{split}
\end{equation*}
where $\Delta_i := \Lcal_i(\tsfm) - \expctn'[\hat{\Lcal}_i(\tsfm)]$.
Let $Y_i := \hat{\Lcal}_i(\tsfm) - \Lcal_i(\tsfm) - \Delta_i$, then the above equation tells that $Y_i$ is sub-Gaussian conditioning on $\Bcal_M$. Following from sub-Gaussian concentration bound, we have 
\begin{multline}
\label{eq_unamed7}
    \textstyle
    \prob\prnthBig{M^\inv \big| \sum \nolimits_{i = 1}^M Y_i \big| \geq \epsilon \ \Big| \Bcal_M} \leq
    e^{ - \frac{c M T \epsilon^2}{(B + 2K L_\ell \ybar \log(T))^2}},
\end{multline}
for some absolute constant $c$. This further translates to, conditioning on $\Bcal_M$, with probability at least $1-\delta$,
\begin{equation*}    
    \textstyle
    \big| M^\inv \sum \nolimits_{i = 1}^M Y_i \big|
    \leq
    \prnthbig{B + 2K L_\ell \ybar \log(T)} \sqrt{\log(2/\delta)/(c MT)}.
\end{equation*}
The definition of $Y_i$ gives $|M^\inv \sum \nolimits_{i =1}^M Y_i | = | \hat{\Lcal}(\tsfm) - \Lcal(\tsfm)$
$ - M^\inv \sum_{i =1}^M\Delta_i| \geq | \hat{\Lcal}(\tsfm) - \Lcal(\tsfm) | - |M^\inv \sum_{i =1}^M\Delta_i|$. This implies, conditioning on $\Bcal_M$, with probability at least $1-\delta$,
\begin{multline}
    \label{eq_unnamed11}
    \textstyle
    \big| \hat{\Lcal}(\tsfm) - \Lcal(\tsfm) \big|
    \leq 
    \big|M^\inv \sum_{i =1}^M\Delta_i\big| + \\
    \prnthbig{B + 2K L_\ell \ybar \log(T)} \sqrt{\log(2/\delta) / (c MT)}.
\end{multline}

\noindent \textbf{Step (ii)}: Upper bound $\sup_{\tsfm \in \Acal} |\hat{\Lcal}(\tsfm) - \Lcal(\tsfm)|$

Let $h(\tsfm):= \hat{\Lcal}(\tsfm) - \Lcal(\tsfm)$, here we seek to upper bound $\sup_{\tsfm \in \Acal} |h(\tsfm)|$. For $\epsilon>0$, let $\epsilon' := \epsilon/(\wbar + \vbar)$ and let $\Acal_{\epsilon'}$ denote the minimal $\epsilon'$-covering of $\Acal$, under the distance $\mu$ in Definition~\ref{def_algDistance}. Note that $|\Acal_{\epsilon'}| = \N(\Acal, \mu, \epsilon')$. This gives that,
\begin{equation}
\small
\label{eq_unnamed3}
    \hspace{-1em}
    \sup_{\tsfm \in \Acal} \hspace{-0.7ex} |h(\tsfm)| {\leq} \max_{\tsfm \in \Acal_{\epsilon'}} \hspace{-0.7ex} |h(\tsfm)| {+} \sup_{\tsfm \in \Acal} \min_{\tsfm' \in \Acal_{\epsilon'}} \hspace{-0.7ex} |h(\tsfm) {-} h(\tsfm')|.
\end{equation}

For the term $\max_{\tsfm \in \Acal_{\epsilon'}} |h(\tsfm)|$, applying the union bound to \eqref{eq_unnamed11} for all $\tsfm \in \Acal_{\epsilon'}$, we obtain that conditioning on $\Bcal_M$, with probability at least $1-\delta$,
\begin{multline}
\label{eq_unnamed4}
    \textstyle
    \hspace{-1em}
    \max_{\tsfm \in \Acal_{\epsilon'}} |h(\tsfm)| 
    \leq 
    \big|M^\inv \sum_{i = 1}^M \Delta_i\big| + \\
    \prnthbig{B {+} 2K L_\ell \ybar \log(T)} \sqrt{\log(2 \N(\Acal, \mu, \epsilon')/\delta)/ (c MT)}.
\end{multline}

Next we consider the term $\sup_{\tsfm \in \Acal} \min_{\tsfm' \in \Acal_{\epsilon'}} |h(\tsfm) - h(\tsfm')|$ in \eqref{eq_unnamed3}. 
Let $\Lcal'(\tsfm) := \expctn[\hat{\Lcal}_0(\tsfm) \mid \Bcal_M]$, $\Delta_{h,1} := | \hat{\Lcal}(\tsfm) - \hat{\Lcal}(\tsfm') |$, $\Delta_{h,2} := |\Lcal'(\tsfm) - \Lcal'(\tsfm')|$, and $\Delta_{h,3} := |\Lcal(\tsfm) - \Lcal'(\tsfm)| + |\Lcal(\tsfm') - \Lcal'(\tsfm')|$. 
Using the definition of $h(\cdot)$ and the triangular inequality, we have
$|h(\tsfm) - h(\tsfm')| \leq \Delta_{h,1} + \Delta_{h,2} + \Delta_{h,3}$. By the Lipschitzness of the loss function $\ell$ and the bound on the distance between $\tsfm$ and $\tsfm'$, i.e., $\mu(\tsfm, \tsfm') \leq \epsilon/(\wbar + \vbar)$, we obtain that, conditioning on $\Bcal_M$, both $\Delta_{h,1}$, $\Delta_{h,2}$ can be upper bounded by $L_\ell \epsilon$. This gives 
\begin{equation}
\label{eq_unnamed12}
    \sup_{\tsfm \in \Acal} \min_{\tsfm' \in \Acal_{\epsilon'}} |h(\tsfm) - h(\tsfm')| \leq 2L_\ell \epsilon + \Delta_{h,3}.
\end{equation}

Plugging \eqref{eq_unnamed12} and \eqref{eq_unnamed4} into \eqref{eq_unnamed3} followed by invoking $\text{Risk}(\widehat{\tsfm})  \leq 2 \sup_{\tsfm \in \Acal} |h(\tsfm)|$ in \eqref{eq_unnamed6} gives that, conditioning on $\Bcal_M$, with probability at least $1-\delta$
\begin{multline}
\label{eq_unnamed5}
    \hspace{-1.5em}
    \small
    \textstyle
    \text{Risk}(\widehat{\tsfm})
    \leq 
    4 L_\ell \epsilon +2\Delta_{h,3} + 2\big|M^\inv \sum_{i = 1}^M \Delta_i\big| + \\
    \hspace{-0.2em}
    2\prnthbig{B {+} 2K L_\ell \ybar \log(T)} \sqrt{\log(2 \N(\Acal, \mu, \epsilon')/\delta)/ (c MT)}.
\end{multline}

\noindent \textbf{Step (iii)}: Upper bound the noise sequence $\crlbkt{\vw_{i,t}, \vv_{i,t}}$

Let $\Ecal$ denote the event in \eqref{eq_unnamed5}, then we have $\prob(\Ecal \mid \Bcal_M) \geq 1-\delta$. In the event $\Bcal_M$, we now set $\wbar = \sqrt{3} \sigma_\vw \sqrt{\log(2MT/\delta)}$ and $\vbar = \sqrt{3} \sigma_\vv \sqrt{\log(2MT/\delta)}$. 
Using the Gaussian concentration bound and the union bound, we obtain that $\prob(\Bcal_M) \geq 1 - \delta$, when $MT \geq 3 \max(\sqrt{n}, \sqrt{m})$. This further yields
\begin{equation}
\label{eq_unnamed8}
\hspace{-1.5em}
    \prob(\Ecal) \geq \prob(\Ecal, \Bcal_M) \geq \prob(\Ecal |\Bcal_M) \prob(\Bcal_M) \geq (1{-}\delta)^2 \geq 1{-}2\delta.
\end{equation}
Now we inspect the term $|M^\inv \sum_{i = 1}^M\Delta_i|$ in the definition of $\Ecal$, i.e., \eqref{eq_unnamed5}. Note that by definition 
$|\Delta_i| 
    = |\expctn[\hat{\Lcal}_i(\tsfm)] - \expctn[\hat{\Lcal}_i(\tsfm) | \Bcal_M]| 
    \leq |\expctn[\hat{\Lcal}_i(\tsfm) | \Bcal_M] (\prob(\Bcal_M)-1)| + |\expctn[\hat{\Lcal}_i(\tsfm) | \Bcal_M^c] \prob(\Bcal_M^c)| 
    \leq 2 B \delta$, 
where the facts $\ell(\cdot, \cdot) \leq B$ and complement probability $\prob(\Bcal_M^c) \leq \delta$ are used. Hence, we have $|M^\inv \sum_{i = 1}^M\Delta_i| \leq 2 B \delta$. 
Similarly, we can show $\Delta_{h,3} \leq 4B\delta$. With these bounds, invoking \eqref{eq_unnamed8} gives, with probability at least $1-2\delta$,
\begin{multline*}
    \text{Risk}(\widehat{\tsfm})
    \leq 
    4L_\ell \epsilon  + 12 B \delta + \\
    2\prnthbig{B + 2K L_\ell \ybar \log(T)} \sqrt{\log(2 \N(\Acal, \mu, \epsilon')/\delta)/(c MT)}.
\end{multline*}
Finally, plugging in the definition $\ybar:=L_g L_\rho \wbar + \vbar$ and $\epsilon' := \epsilon/(\wbar + \vbar)$ concludes the proof.
\end{proof}

\section{Systems that are hard to learn in-context}
\label{sec_hardsystem}

In this section, we investigate two limitations of \MF, one explained by our theoretical guarantees, the other regarding the performance degradation in the face of distribution shifts. 

To illustrate the first challenge, consider two distinct classes of linear systems. The first class employs the same generation procedure as described in Section \ref{subsec:lin}. In the second class, we follow a similar generation procedure, except for the $\vA$ matrices, which are generated as upper-triangular matrices. Here, the diagonal entries are sampled from the interval $[-0.95, 0.95]$, while the upper triangular entries are sampled from the range $[-1, 1]$. The experimental results, presented in Fig.~\ref{fig_hardSys}, demonstrate that, compared with the densely generated $\vA$ matrices, the upper-triangular $\vA$ matrices make it harder for \MF to learn the optimal Kalman filter. As depicted in Fig.~\ref{fig_hardSys}, the powers of the upper-triangular $\vA$ matrices exhibit a slower decay rate and even initial overshoot in comparison to those of the dense $\vA$ matrices. 
Noticing that $\vy_t = \sum_{i=1}^t \vC \vA^i \vw_{t-i} + \vv_t$, this implies that upper triangular $\vA$ establishes stronger and longer temporal correlation between $\vy_t$ and past $\vy$'s, i.e., slow mixing. This poses challenges to \MF but can be potentially mitigated by feeding \MF longer prompts, i.e. the time horizon $T$. Theoretically, the slow decay rate implies larger $L_\rho := C_\rho/(1-\rho)$, which consequentially gives a looser risk upper bound in Theorem \ref{thrm_excessRisk}.

In our experiments in Section~\ref{sec_experiments}, the distribution the source and target systems are drawn from is the same. Here we run an experiment to illustrate how \MF behaves if the target distribution is different than the source one. In particular, under the experimental setup of Section~\ref{subsec:lin}, we train the \MF with noise covariances $\sigma_{\vw}^2 =  \sigma_{\vv}^2 =0.1 \vI_n$ and test on systems subject to a different noise covariance. As shown in Fig.~\ref{fig_dist_shift}, \MF's performance degrades when the target systems are subject to a different noise distribution, especially when the noise covariance increases.

\begin{figure}
    \noindent
    \hspace{-1.2em}
    \centering
    \includegraphics[scale=0.34]{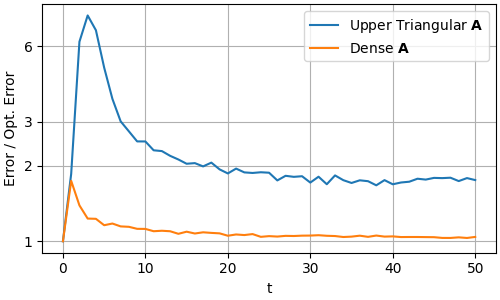} 
    \includegraphics[scale=0.34]{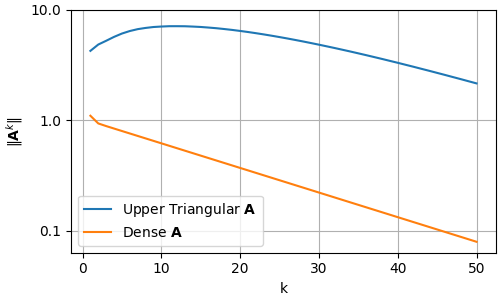}  
    \footnotesize \customlabelfig{fig_errorratio}{\ref{fig_hardSys}(a)} (a) \hspace{0.45 \linewidth}
    \customlabelfig{fig_matrixpower}{\ref{fig_hardSys}(b)} (b)
    \vspace{-1em}
    \caption{Comparison between dense and upper-triangular $\vA$ matrices: (a) prediction error ratio between \MF and Kalman filter; (b) matrix powers averaged over all source systems.}
    \label{fig_hardSys} 
\end{figure}

\begin{figure}
\centering
    \includegraphics[scale=0.45]{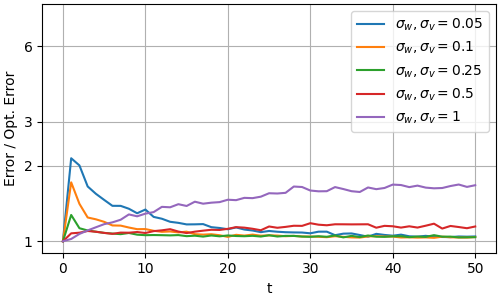} 
    \vspace{-1em}
    \caption{Performance of \MF compared to Kalman Filter when noise level in test is different than train.}
    \label{fig_dist_shift}  
    \vspace{-0.5cm}
\end{figure}

\section{Conclusion}

In conclusion, this work has demonstrated the potential of transformers in addressing prediction problems for dynamical systems. The proposed \MF exhibits remarkable performance by adapting to unseen settings, non-i.i.d. noise, and time-varying dynamics.

This work motivates new avenues for the application of transformers in continuous control and dynamical systems. Future work could extend the \MF approach to closed-loop control problems to meta-learn policies for problems such as the optimal quadratic control. It is also of interest to explore new training strategies to promote robustness (e.g., against distribution shifts) and safety of this approach in control problems.

\section{ACKNOWLEDGMENTS}
The authors would like to thank Sultan Daniels, Gautam Goel, Wentinn Liao, Gireeja Ranade, and Anant Sahai for their careful efforts on reproducing the experimental results and their feedback, which helped us revise the descriptions of the experimental setup.
%%%%%%%%%%%%%%%%%%%%%%%%%%%%%%%%%%%%%%%%%%%%%%%%%%%%%%%%%%%%%%%%%%%%%%%%%%%%%%%%

\bibliographystyle{IEEEtran}
{
\small
\bibliography{citations.bib}
}

\end{document}